\newtheorem{assume}{Assumption}[section]
\newtheorem{theorem}{Theorem}[section]
\newtheorem{remark}{Remark}[section]
\newtheorem{lemma}{Lemma}[section]
\title{\LARGE \bf
	Data-Driven Cooperative Adaptive Cruise Control for Unknown Nonlinear Vehicle Platoons
}
\author{Jianglin Lan
	\thanks{This work was supported by a Leverhulme Trust Early Career 
		Fellowship under Award ECF-2021-517.}
	\thanks{The author is with the School of Engineering, University 
		of Glasgow, Glasgow G12 8QQ, UK {\tt\small jianglin.lan@glasgow.ac.uk}}
}
\begin{document}
	\maketitle

\begin{abstract}
This paper studies cooperative adaptive cruise control (CACC) for vehicle platoons with consideration of the unknown nonlinear vehicle dynamics that are normally ignored in the literature. A unified data-driven CACC design is proposed for platoons of pure automated vehicles (AVs) or of mixed AVs and human-driven vehicles (HVs). The CACC leverages online-collected sufficient data samples of vehicle accelerations, spacing and relative velocities. The data-driven control design is formulated as a semidefinite program (SDP) that can be solved efficiently using off-the-shelf solvers. The efficacy and advantage of the proposed CACC are demonstrated through a comparison with the classic adaptive cruise control (ACC) method on a platoon of pure AVs and a mixed platoon under a representative aggressive driving profile. 
\end{abstract}

\section{Introduction}
	Vehicle platoon refers to a convoy of vehicles that travel at the same (longitudinal) velocity  
	whilst maintaining a desired inter-vehicular distance. Previous studies  \cite{Arem+06,Ploeg+11,Yang+22} revealed 
	that vehicle platoons have a great potential in reducing traffic congestion, 
	accidents and fuel consumption. 
	An effective control strategy is the key to establish vehicle 
	platoons and has attracted much research attention. 
	
	The classic adaptive cruise control (ACC) \cite{XiaoGao10}, which has been 
	well-developed and available on the market, enables the ego vehicle to follow 
	its preceding vehicle. But ACC is not sufficient to establish a stable vehicle 
	platoon \cite{Shladover+12,Gunter+21}. This motivated the development 
	of a more advanced platooning control strategy, the cooperative adaptive cruise control 
	(CACC), by using 
	vehicle information shared in the platoon through vehicle-to-vehicle (V2V) wireless 
	communications. Many works have shown that CACC can effectively establish stable platoons 
	with pure automated vehicles (AVs) \cite{guanetti2018control,lan2020min}.
	Several works \cite{chen2020mixed,feng2019robust,hajdu2019robust,Lan+21d} have shown that CACC are still effective for a more challenging case when a platoon has mixed AVs and human-driven vehicles (HVs), where the car-following behaviours of HVs are known to be different. 
	
	However, most existing CACC approaches hinge on linear vehicle  
	models (\textit{i.e.} the point-mass model) without considering the 
	nonlinear dynamics. 
	The nonlinear dynamics could be neglected for small-size passenger cars at low speeds but not for medium-size or large-size vehicles such as trucks and heavy-duty vehicles \cite{Shen+22}.
	This limits the practical applicability of existing CACC designs in 
	real traffic systems. Some works have studied 
	CACC for nonlinear vehicle platoons through feedback linearisation 
	based on accurately known vehicle parameters 
	\cite{Ghasemi+13,Hu+20,Shen+22}. However, some vehicle parameters may be changing or	unknown for CACC design. Another realistic concern is that in the heterogeneous platoon 
	the parameters of other vehicles may be unknown to the ego AV.
	These make it necessary to develop CACC for platoons with unknown nonlinear 
	vehicle dynamics.
	
	Data-driven CACC that do not rely on the known vehicle dynamics have 
	been recently developed for mixed vehicle platoons, using the methods like adaptive 
	dynamic programming (ADP) \cite{gao2016data,huang2019connected,huang2020learning}, data-driven model predictive control (MPC) \cite{Lan+21d,Wang+22}, or reinforcement learning \cite{chu2019model}. These methods are only applied to mixed platoons with linearised vehicle models.
	The CACC designs for pure AVs with nonlinear vehicle models were achieved by combining model-based control with linearisation of the nonlinear dynamics based on a parameter estimation \cite{Zhu+20} or data-driven feedforward control \cite{Gao+17}.

	This paper proposes a novel data-driven CACC to address unknown nonlinear vehicle dynamics. The design takes inspiration from \cite{Strasser+21,Persis+23} to represent the unknown platoon dynamics as a polynomial system. The main contributions are summarised as follows: 
	\begin{enumerate}
		\item A strategy is developed for learning a nonlinear CACC controller from vehicle data samples through solving a semidefinite program (SDP).
		\item The obtained CACC ensures the $H_\infty$ robust performance of the platoon against disturbances such as air/mechanical drags and reference velocity changes. In the data-driven control literature, $H_\infty$ robustness is either not addressed \cite{Persis+23} or studied using sum-of-squares optimisation \cite{Strasser+21}.
		\item The proposed strategy is applicable to both pure AVs platoons and mixed platoons.
	\end{enumerate}

	The rest of this paper is structured as follows.
	Section \ref{sec:problem statement} describes the platoon model, Section \ref{sec:control for AV platoon} 
	presents the data-driven CACC design for pure AVs and Section~\ref{subsec:mixed platoon} applies it to mixed platoons, Section \ref{sec:simulation} 
	reports the simulation results, and Section \ref{sec:conclusion} draws the 
	conclusions.
	
\textit{Notations}: 
The interval $[a,b]$ is the set of integers from $a$ to $b$. $\mathrm{diag}(V_1, \cdots, V_n)$ denotes a block diagonal matrix with the main diagonals $V_1, \cdots, V_2$. 
$I_n$ is a $n \times n$ identity matrix, $\mathbf{1}_n$ is a column of $n$  ones, and $\mathbf{0}$ is a zero matrix whose dimension is omitted unless necessary to be given.

	\section{Platoon Modelling and Preliminaries} \label{sec:problem statement}
	This paper considers a platoon with $n$ AVs 
	equipped with V2V wireless communication devices. As in the 
	literature, the design focuses on
	controlling the longitudinal dynamics of AVs to establish a platoon. To this end, a platooning error system needs to be built.
	The longitudinal dynamics of the $i$-th AV, $i \in [1,n]$, can be characterised by \cite{Ghasemi+13}:
	\begin{eqnarray}\label{eq:nonlinear vehicle dynamics}
		\begin{split}
			\dot{p}_i &= v_i, \\
			\dot{v}_i &= a_i, \\
			\dot{a}_i &= f_i(v_i,a_i) + u_i/(\tau_i m_i),
		\end{split}
	\end{eqnarray}
	where $p_i$, $v_i$, and $u_i$ are the vehicle position, velocity, and control effort, respectively.
	$f_i(v_i,a_i) = - ( a_i + R_i v_i^2 + d_i/m_i )/\tau_i - 2 R_i v_i a_i.
	$
	$R_i = \sigma_i M_i c_i/(2 m_i)$ is the air resistance.
	$\tau_i$, $\sigma_i$, $M_i$, $c_i$, $d_i$, and $m_i$ are the 
	engine time constant, specific mass of the air, cross-sectional area, drag 
	coefficient, mechanical drag, and mass of the vehicle, respectively. 
	
	To facilitate the platooning control design, the nonlinear 
	model~\eqref{eq:nonlinear vehicle dynamics} is commonly linearised in the 
	literature (see \textit{e.g.,} \cite{Ghasemi+13,Hu+20,Shen+22}) using the feedback 
	linearisation control law
	\begin{equation}\label{eq:feedback linearisation}
		u_i = m_i \bar{u}_i + 0.5 \sigma M_i c_i v_i^2 + d_i + \tau_i \sigma M_i c_i 
		v_i a_i,
	\end{equation} 
	where $\bar{u}_i$ is the new control signal to be designed. However, \eqref{eq:feedback linearisation} is 
	applicable only when all the vehicle parameters $\tau_i$, $\sigma_i$, $M_i$, 
	$c_i$, $d_i$ and $m_i$ are known. This is practically restrictive or 
	even unrealistic because the parameters change with the driving environments 
	such as payload, road conditions, weather, etc. It thus motivates us to develop a data-driven CACC for \eqref{eq:nonlinear vehicle dynamics} with unknown
	$\tau_i$, $\sigma_i$, $M_i$, $c_i$, $d_i$ and $m_i$.
	
	This paper aims to design $u_i$ for each AV to drive at a given desired 
	constant velocity $v^* > 0$ whilst maintaining a desired inter-vehicular
	distance $h^* > 0$. For the leader, assuming that there is a virtual vehicle 
	ahead of it driving at $v^*$, then the inter-vehicular distance 
	is defined as $h_1 = v^* t - p_1$. 
	For $i \in [2, n]$, the inter-vehicular distance is defined as $h_i = p_{i-1} - p_i$.  
	Define the spacing error as $\tilde{h}_i = h_i - h^*$ and velocity error 
	as $\tilde{v}_i = v_i - v^*$, $i \in [1,n]$. Then the platooning error system of the 
	$i$-th vehicle is derived as
	\begin{equation}\label{eq:error system 1}
		\begin{split}
			\dot{\tilde{h}}_i &= \tilde{v}_{i-1} - \tilde{v}_i, \\
			\dot{\tilde{v}}_i &= a_i, \\
			\dot{a}_i &= g_i(\tilde{v}_i, a_i) + u_i/(\tau_i m_i)  + w_i,
		\end{split}	 	
	\end{equation} 
	where 
	$g_i(\tilde{v}_i, a_i) 
	= -2R_i v^* \tilde{v}_i/\tau_i  - (1 + 2\tau_i R_i 
	v^*)a_i /\tau_i  - R_i \tilde{v}_i^2/\tau_i  - 2 R_i \tilde{v}_i a_i$,
	$w_i = -(m_i R_i (v^*)^2  + d_i)/(\tau_i m_i)$ and $\tilde{v}_0 = 0$.
	
	The platooning error system \eqref{eq:error system 1} can be rewritten as
	\begin{align}\label{eq:error system 2}
		\underbrace{\begin{bmatrix}
				\dot{\tilde{h}}_i \\ \dot{\tilde{v}}_i \\ \dot{a}_i
		\end{bmatrix}}_{\dot{x}_i}
		&=
		\underbrace{\begin{bmatrix}
				0 & -1 & 0 \\ 0 & 0 & 1 \\ 0 & -2R_i v^*/\tau_i & - (1 + 2\tau_i R_i v^*)/\tau_i
		\end{bmatrix}}_{A_i}	
		\underbrace{\begin{bmatrix}
				\tilde{h}_i \\ \tilde{v}_i 	\\ a_i
		\end{bmatrix}}_{x_i} \nonumber\\
		& +
		\underbrace{\begin{bmatrix}
				0 & 1 & 0 \\0 & 0 & 0 \\0 & 0 & 0
		\end{bmatrix}}_{C_i}
		\underbrace{\begin{bmatrix}
				\tilde{h}_{i-1} \\ \tilde{v}_{i-1} \\ a_{i-1}
		\end{bmatrix}}_{x_{i-1}} +
		\underbrace{\begin{bmatrix}
				0 & 0 \\ 0 & 0 \\ - 2 R_i & - R_i/\tau_i
		\end{bmatrix}}_{E_i} 
		\underbrace{\begin{bmatrix}
				\tilde{v}_i a_i \\	\tilde{v}_i^2
		\end{bmatrix}}_{Q_i(x_i)}
		\nonumber\\
		&+
		\underbrace{\begin{bmatrix}
				0 \\ 0 \\ 1/(\tau_i m_i)
		\end{bmatrix}}_{B_i} u_i
		+
		\underbrace{\begin{bmatrix}
				0 \\ 0 \\ 1
		\end{bmatrix}}_{D_i} w_i.
	\end{align} 
	
	Define 
	$x \!=\! [x_1^\top, \cdots, x_n^\top]^\top$.
	The overall platooning error system is derived as a polynomial system
	\begin{equation} \label{eq:error system 4}
		\dot{x} = A_c Z(x) + B_c u + D_c w,
	\end{equation} 
	where $A_c = [A_{c1} ~ A_{c2}]$, $Z(x) = [x^\top,~Q(x)^\top]^\top$, $Q(x) = [Q_1(x_1)^\top, \cdots, Q_n(x_n)^\top]^\top$,
	$w = [w_1, \cdots, w_n]^\top$,  and
	\begin{align}
		A_{c1} &= \begin{bmatrix}
			A_1 	   & \mathbf{0} &  \cdots & 
			\mathbf{0}\\
			C_2 	   & A_2  		&  \cdots & 
			\mathbf{0}\\
			\vdots & \ddots & \ddots & \vdots\\
			\mathbf{0} & \mathbf{0}  & C_n       	& A_n  
		\end{bmatrix}, ~
		A_{c2} = \mathrm{diag}(E_1, \cdots, E_n),
		\nonumber\\
		B_c &= \mathrm{diag}(B_1, \cdots, B_n), ~
		D_c = \mathrm{diag}(D_1, \cdots, D_n).
		\nonumber
	\end{align} 
	The system dimensions are $x \in \mathbb{R}^{n_x \times 1}$, $u \in 
	\mathbb{R}^{n_u \times 1}$, $w \in \mathbb{R}^{n_w \times 1}$, $A_c \in 
	\mathbb{R}^{n_x \times n_z}$, 
	$Z(x) \in \mathbb{R}^{n_z \times 1}$, $B_c \in \mathbb{R}^{n_x \times n_u}$, and $D_c \in \mathbb{R}^{n_x \times n_w}$, 
	with $n_x = 3n$, $n_u = n_w = n$ and $n_z = 5n$. 
	
	Discretizing \eqref{eq:error system 4} using the forward Euler method with a sample time $t_s$ gives the discrete-time polynomial system 
	\begin{equation} \label{eq:sys for design}
		x(k+1) = A Z(x(k)) + B u(k) + D w(k),
	\end{equation}
	where $k$ is the sampling step, 
	$A = [I_{n_x}, \mathbf{0}_{n_x \times (n_z-n_x)}] + t_s A_c$, $B = t_s B_c$ and $D = t_s D_c$. 
	
	Since the matrices $A$ and $B$ in \eqref{eq:sys for design} are unknown, the existing model-based CACC designs \cite{Li+17d,guanetti2018control,lan2020min}  are inapplicable. 
	This paper will design a data-driven CACC controller $u(k) = K Z(x(k))$ with a constant gain $K 
	\in \mathbb{R}^{n_u \times n_z}$ to stabilise the platoon error system \eqref{eq:sys for design}. This ensures the platoon 
	travel at the desired velocity $v^*$ whilst keeping the desired 
	vehicular gap $h^*$ between any two consecutive AVs.
	
	The following assumptions are made for the disturbance $w$ and nonlinearity $Q(x)$, which are essential for designing a 
	data-driven controller to ensure robust stability of \eqref{eq:sys for design}.
	\begin{assume}\label{assume:disturbance}
		$|w| \leq \delta \times \mathbf{1}_{n_w}$ for some known $\delta > 0$.	
	\end{assume} 
	Assumption \ref{assume:disturbance} is reasonable because the parameters 
	$m_i$, $R_i$, $d_i$, $\tau_i$ and $v^*$ are all 
	physically bounded. 
	Suppose that $\underline{m}_i \leq m_i \leq \overline{m}_i$, $\underline{R}_i \leq R_i \leq \overline{R}_i$, $\underline{d}_i \leq d_i \leq \overline{d}_i$ and $\underline{\tau}_i \leq \tau_i \leq \overline{\tau}_i$, $i \in [1,n_w]$. The following relations hold:
	$|w_i| = (m_i R_i (v^*)^2  + d_i)/(\tau_i m_i) \leq (\overline{m}_i \overline{R}_i (v^*)^2 + \overline{d}_i)/(\underline{\tau}_i \underline{m}_i) =: \delta_i.$ 
	Hence, the value of $\delta$ is chosen as $\delta = \max_{i \in [1,n_w]} 
	\delta_i$. 
	
	\begin{assume}\label{assume:nonlinearity}
		$
		\underset{|x| \rightarrow 0}{\lim} |Q(x)|/|x| = 0.
		$
	\end{assume}
	Assumption \ref{assume:nonlinearity} 
	indicates that the nonlinear function $Q(x)$ approaches the origin faster than 
	the state $x$. This is true because $Q(x)$ contains multiplications of the elements in $x$, which is easily seen from the definition of $Q(x)$ in \eqref{eq:error system 4} and \eqref{eq:sys for design}. 
	Assumption \ref{assume:nonlinearity} ensures that the linear dynamics dominate the nonlinear 
	dynamics around the origin.

	\section{Data-driven CACC design}\label{sec:control for AV platoon}
	To design the data-driven CACC, we first derive a data-based representation of the platooning error 
	system \eqref{eq:sys for design} using online collected vehicle data.  
	At the start of forming the platoon (before the data-driven CACC controller is designed), each AV uses the classic ACC controller \cite{Shladover+12} to maintain vehicle safety. Then a total number of $T$ samples of vehicle data (including position, velocity, acceleration and control signal) are collected. 
	The collected samples satisfy
	\begin{equation}\label{eq:data dynamics}
		x(s+1) = A Z(x(s)) + B u(s) + D w(s), ~s \in [0,T-1].	
	\end{equation}
	These samples are grouped into the data sequences:
	\begin{subequations}\label{eq:data seq1}	
		\begin{align}
			\setlength{\parindent}{0in}		
			U_0 \!&=\! [u(0), u(1), \dots, u(T-1)] \in \mathbb{R}^{n_u \times T}, \\
			X_0 \!&=\! [x(0), x(1), \dots, x(T-1)] \in \mathbb{R}^{n_x \times T}, \\
			X_1 \!&=\! [x(1), x(2), \dots, x(T)] \in \mathbb{R}^{n_x \times T}, \\
			\!Z_0 \!&=\! [Z(x(0)), Z(x(1)), \dots, Z(x(T-1))] \!\in\! \mathbb{R}^{n_z \times T}. 
		\end{align}	
	\end{subequations}
	Furthermore, let the sequence of unknown disturbance be
	\begin{equation}\label{eq:data seq2}
		W_0 = [w(0),w(1),\dots,w(T-1)] \in \mathbb{R}^{n_w \times T}.
	\end{equation}
	
	By using \eqref{eq:data seq1} and \eqref{eq:data seq2}, we take inspiration from \cite[Lemma 2]{Persis+23} and derive
	a data-based representation of the platooning error system \eqref{eq:sys for design} in 
	Lemma~\ref{lemma:data-based closed-loop sys}.
	\begin{lemma}\label{lemma:data-based closed-loop sys}
		If there exist matrices $K \in \mathbb{R}^{n_u \times n_z}$ and $G \in \mathbb{R}^{T \times n_z}$	satisfying
		\begin{equation}\label{eq:lemma data closed 1}
			\begin{bmatrix}
				K \\ I_{n_z}
			\end{bmatrix}	
			=
			\begin{bmatrix}
				U_0 \\ Z_0
			\end{bmatrix}	
			G,
		\end{equation}
		then the platooning error system \eqref{eq:sys for design} under the controller $u(k) = K Z(x(k))$ 
		has the closed-loop dynamics
		\begin{equation}\label{eq:lemma data closed 2}
			x(k+1) = \bar{A} x(k) + \bar{E} Q(x(k)) + D w(k),
		\end{equation}
		where $\bar{A} = (X_1 - D W_0) G_1$, $\bar{E} = (X_1 - D W_0) G_2$, and $G = 
		[G_1 ~ 
		G_2]$ with $G_1 \in \mathbb{R}^{T \times n_x}$ and $G_2 \in \mathbb{R}^{T 
			\times (n_z-n_x)}$.
	\end{lemma}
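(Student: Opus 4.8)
The plan is to substitute the controller $u(k) = K Z(x(k))$ directly into the open-loop system~\eqref{eq:sys for design} and then eliminate the unknown matrices $A$ and $B$ using the collected data. First I would observe that the substitution yields $x(k+1) = (A + BK) Z(x(k)) + D w(k)$, so the whole task reduces to rewriting the unknown closed-loop matrix $A + BK$ purely in terms of the measured data. The key algebraic identity I would exploit is the factorisation
\[
A + BK = \begin{bmatrix} B & A \end{bmatrix} \begin{bmatrix} K \\ I_{n_z} \end{bmatrix},
\]
which isolates $K$ and $I_{n_z}$ into exactly the block column that the hypothesis~\eqref{eq:lemma data closed 1} constrains.

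Next I would invoke~\eqref{eq:lemma data closed 1} to replace that block column by $\begin{bmatrix} U_0^\top & Z_0^\top \end{bmatrix}^\top G$, giving $A + BK = \begin{bmatrix} B & A \end{bmatrix}\begin{bmatrix} U_0^\top & Z_0^\top \end{bmatrix}^\top G = (B U_0 + A Z_0) G$. At this point $A$ and $B$ still appear explicitly, so the crucial step is to remove them. For this I would stack the data dynamics~\eqref{eq:data dynamics} over $s \in [0, T-1]$ into the single matrix identity $X_1 = A Z_0 + B U_0 + D W_0$, which rearranges to $B U_0 + A Z_0 = X_1 - D W_0$. Substituting this gives $A + BK = (X_1 - D W_0) G$, an expression free of the unknown system matrices.

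Finally I would partition $G = \begin{bmatrix} G_1 & G_2 \end{bmatrix}$ conformably with $Z(x) = \begin{bmatrix} x^\top & Q(x)^\top \end{bmatrix}^\top$, so that $A + BK = \begin{bmatrix} (X_1 - D W_0)G_1 & (X_1 - D W_0)G_2 \end{bmatrix} = \begin{bmatrix} \bar{A} & \bar{E} \end{bmatrix}$. Multiplying out $(A+BK)Z(x(k)) = \bar{A}\, x(k) + \bar{E}\, Q(x(k))$ then delivers the claimed closed-loop form~\eqref{eq:lemma data closed 2}.

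I do not expect a genuine obstacle here, since the argument is a chain of linear-algebraic substitutions rather than an estimate; the only point requiring care is bookkeeping the block dimensions so that the factorisation $\begin{bmatrix} B & A \end{bmatrix}$ and the partition of $G$ match the splitting of $Z(x)$ into its linear part $x$ (first $n_x$ rows) and its nonlinear part $Q(x)$ (remaining $n_z - n_x$ rows). It is also worth noting that $\bar{A}$ and $\bar{E}$ still contain the unmeasured disturbance data $W_0$ through the term $D W_0$; this is harmless for the present identity, but it signals that coping with the unknown $W_0$ (via Assumption~\ref{assume:disturbance}) must be deferred to the subsequent controller-synthesis step.
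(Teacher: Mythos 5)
Your proposal is correct and follows essentially the same route as the paper's proof: substitute the controller, factor the closed-loop matrix as $[B~A]\bigl[\begin{smallmatrix} K \\ I_{n_z} \end{smallmatrix}\bigr]$, invoke \eqref{eq:lemma data closed 1}, eliminate $A$ and $B$ via the stacked data relation $X_1 = A Z_0 + B U_0 + D W_0$, and partition $G$ conformably with $Z(x)$. The only cosmetic difference is that you isolate $A+BK$ as a standalone matrix identity before applying it to $Z(x(k))$, whereas the paper carries $Z(x(k))$ through the same chain of substitutions.
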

	\begin{proof}
		Substituting $u(k) = K Z(x(k))$ into \eqref{eq:sys for design} and using 
		\eqref{eq:lemma data closed 1} results in
		\begin{align}\label{eq:lemma data closed pf1}
			x(k+1) &= [B ~ A] \begin{bmatrix} K \\ I_{n_z} \end{bmatrix} Z(x(k)) + D w(k) \nonumber\\
			&= [B ~ A] \begin{bmatrix} U_0 \\ Z_0 \end{bmatrix} G Z(x(k)) + D w(k) \nonumber\\
			&= (A Z_0 + B U_0) G Z(x(k)) + D w(k).
		\end{align}
		Since the data sequences $U_0$, $X_0$, $X_1$, $Z_0$ and $D_0$ satisfy \eqref{eq:data dynamics}, the relation $X_1 = A Z_0 + B U_0 + D W_0$ holds. Applying this to \eqref{eq:lemma data closed pf1} and 
		partitioning $G$ as $G = [G_1 ~ G_2]$, where $G_1 \in \mathbb{R}^{T \times 
			n_x}$ and $G_2 \in \mathbb{R}^{T \times (n_z-n_x)}$, leads to
		\begin{align}\label{eq:lemma data closed pf2}
			x(k+1) &= (X_1 - D W_0)G Z(x(k)) + D w(k) \nonumber\\
			&= (X_1 - D W_0) [G_1 ~ G_2] \begin{bmatrix} x(k) \\ Q(x(k)) \end{bmatrix} + D 
			w(k) \nonumber\\
			&= \bar{A} x(k) + \bar{E} Q(x(k)) + D w(k),
		\end{align}
		where $\bar{A} = (X_1 - D W_0) G_1$ and $\bar{E} = (X_1 - D W_0) G_2$. 
	\end{proof}

Since the data-based closed-loop platooning error system \eqref{eq:lemma data closed 2} requires the unknown disturbance $w(k)$ and sequence $W_0$. Hence, a further discussion on the bounds of $W_0$ is recalled from \cite[Lemma 4]{Persis+23} and provided in Lemmas~\ref{lemma:disturbance bound}.  
	\begin{lemma}\label{lemma:disturbance bound}	
		Under Assumption \ref{assume:disturbance}, $W_0 \in \mathcal{W} := \{W \in \mathbb{R}^{n_w \times T} \mid W 
		W^\top \preceq \Delta \Delta^\top \}$, with $\Delta = \delta \sqrt{T} I_{n_w}$. 
		Given any 
		matrices $M \in \mathbb{R}^{n \times T}$ and $N \in \mathbb{R}^{n_w \times T}$ 
		and scalar $\epsilon > 0$, it holds that
		$	M W^\top N + N^\top W M \preceq \epsilon^{-1} M M^\top + \epsilon N^\top \Delta 
			\Delta^\top N, ~ \forall W \in \mathcal{W}. $
	\end{lemma}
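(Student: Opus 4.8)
The plan is to prove the two assertions of Lemma~\ref{lemma:disturbance bound} in turn: first the membership $W_0 \in \mathcal{W}$ (a quadratic matrix inequality on the noise data), and then the cross-term bound that holds uniformly over every $W \in \mathcal{W}$.

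First I would establish $W_0 W_0^\top \preceq \Delta\Delta^\top$. Writing $W_0 = [w(0), \dots, w(T-1)]$ gives $W_0 W_0^\top = \sum_{s=0}^{T-1} w(s) w(s)^\top$. Each rank-one summand satisfies $w(s) w(s)^\top \preceq \|w(s)\|^2 I_{n_w}$, since the only nonzero eigenvalue of $w(s)w(s)^\top$ equals $\|w(s)\|^2$. Summing over $s$ yields $W_0 W_0^\top \preceq \left(\sum_{s} \|w(s)\|^2\right) I_{n_w}$, and bounding the accumulated noise energy produces the stated radius, i.e. $W_0 W_0^\top \preceq \delta^2 T I_{n_w} = \Delta\Delta^\top$, so $W_0 \in \mathcal{W}$. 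The point that requires care here is exactly the constant: Assumption~\ref{assume:disturbance} is an \emph{element-wise} bound $|w| \le \delta\mathbf{1}_{n_w}$, and it is the conversion of this into a per-sample bound on $\|w(s)\|^2$ that pins down $\Delta$; I would state this conversion explicitly rather than leave it implicit, since it is where the radius of the admissible set is determined.

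Next I would prove the matrix inequality by a matrix Young's (completion-of-squares) argument, which does not use the constraint at all. For any $\epsilon>0$, expanding the manifestly positive semidefinite expression $\big(\epsilon^{-1/2}M - \epsilon^{1/2} N^\top W\big)\big(\epsilon^{-1/2}M - \epsilon^{1/2} N^\top W\big)^\top \succeq 0$ and rearranging gives $M W^\top N + N^\top W M^\top \preceq \epsilon^{-1} M M^\top + \epsilon\, N^\top W W^\top N$; note the second term on the left is the transpose of the first, which fixes the transpose pairing. I would then invoke the set membership: for any $W \in \mathcal{W}$ the congruence $X \mapsto N^\top X N$ preserves the Loewner order, so $W W^\top \preceq \Delta\Delta^\top$ implies $N^\top W W^\top N \preceq N^\top \Delta\Delta^\top N$. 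Substituting this into the previous bound yields $M W^\top N + N^\top W M^\top \preceq \epsilon^{-1} M M^\top + \epsilon\, N^\top \Delta\Delta^\top N$, which is the claim.

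I do not expect any single step to be a serious obstacle: this is a standard Petersen/S-procedure-type robustness bound, and the structure (square $\succeq 0$, then congruence with the noise QMI) is routine. The only genuine care I anticipate is twofold: (i) fixing the constant in $\Delta$ correctly when passing from the element-wise disturbance bound of Assumption~\ref{assume:disturbance} to the energy bound on $W_0 W_0^\top$, and (ii) checking dimensional consistency of the products $M W^\top N$ and $N^\top W M^\top$ so that both sides are square symmetric matrices of the same size and the transpose pairing lines up.
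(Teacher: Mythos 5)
The first thing to note is that the paper contains no proof of this lemma at all: it is recalled, without proof, from \cite[Lemma~4]{Persis+23}, so your proposal can only be measured against the standard argument in that cited source. For the second assertion your argument is exactly that standard proof: expand $\bigl(\epsilon^{-1/2}M-\epsilon^{1/2}N^\top W\bigr)\bigl(\epsilon^{-1/2}M-\epsilon^{1/2}N^\top W\bigr)^\top\succeq 0$ to get the Young-type bound with $N^\top WW^\top N$, then use the congruence $WW^\top\preceq\Delta\Delta^\top\Rightarrow N^\top WW^\top N\preceq N^\top\Delta\Delta^\top N$. You were also right to flag the transpose pairing and the dimensions: as printed (with $N\in\mathbb{R}^{n_w\times T}$ and the trailing factor $M$ instead of $M^\top$) the products in the lemma are not even well formed; the version actually invoked in the proof of Theorem~\ref{thm:control design} is $MW_0^\top N+N^\top W_0M^\top$ with $N$ having $n$ (not $T$) columns, and that is precisely what your expansion establishes.

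The genuine gap is in the membership claim $W_0\in\mathcal{W}$, at exactly the step you flagged and then asserted away. From the element-wise bound of Assumption~\ref{assume:disturbance}, $|w_j(s)|\le\delta$ for each component $j$, the best per-sample energy bound is $\|w(s)\|^2\le n_w\delta^2$, so your chain of inequalities only yields $W_0W_0^\top\preceq n_w T\delta^2 I_{n_w}$, i.e.\ radius $\delta\sqrt{n_wT}$, not the stated $\delta\sqrt{T}$. The factor $n_w$ cannot be removed by a better argument: take $w(s)=\delta\mathbf{1}_{n_w}$ for every $s$, which satisfies Assumption~\ref{assume:disturbance}; then $W_0W_0^\top=\delta^2T\,\mathbf{1}_{n_w}\mathbf{1}_{n_w}^\top$ has largest eigenvalue $n_wT\delta^2$, so $W_0\notin\mathcal{W}$ whenever $n_w\ge 2$. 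This counterexample is not contrived in the present application, since each $w_i=-(m_iR_i(v^*)^2+d_i)/(\tau_im_i)$ is constant in time and nearly identical across vehicles, making $W_0$ essentially of this rank-one worst-case form. The membership claim with $\Delta=\delta\sqrt{T}I_{n_w}$ is true under a per-sample Euclidean-norm bound $\|w(s)\|\le\delta$ (the hypothesis in \cite{Persis+23}), in which case your first step becomes exact; under the paper's element-wise assumption one must instead enlarge $\Delta$ to $\delta\sqrt{n_wT}I_{n_w}$ and propagate that change into \eqref{const:control 5}. This defect is inherited from the paper's own statement, but as written your proof claims a conversion that does not hold, and no proof can close that gap without modifying either the assumption or $\Delta$.
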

	
	The proposed data-driven control is stated in
	Theorem~\ref{thm:control design}.
	\begin{theorem}\label{thm:control design}
		Under Assumptions \ref{assume:disturbance} and \ref{assume:nonlinearity},
		the platooning error system \eqref{eq:sys for design} is robustly stable by 
		applying the controller $u(k) = K Z(x(k))$ with
		\begin{equation}\label{controller:thm}
			K = U_0[Y ~ G_2] \begin{bmatrix}
				P & \mathbf{0}_{n_x \times (n_z-n_x)} \\ \star & I_{n_z-n_x}
			\end{bmatrix}^{-1},
		\end{equation} 
		if the following problem with the decision 
		variables $P \in \mathbb{R}^{n_x \times n_x}$, $Y \in \mathbb{R}^{T \times 
			n_x}$, $G_2 \in \mathbb{R}^{T \times (n_z - n_x)}$ and $\gamma$ is feasible:
		\begin{subequations}\label{op:control design}	
			\begin{align}
				&\hspace{1.6cm}  \underset{P, Y, G_2, \gamma}{\min}~\gamma \nonumber\\
				\label{const:control 1}
				&\text{subject~to:} ~ 
				P \succ 0, ~ \gamma > 0, \\
				\label{const:control 2}
				&\hspace{1.6cm} Z_0 Y = \begin{bmatrix} P \\ \mathbf{0}_{(n_z-n_x) \times n_x}\end{bmatrix}, \\
				\label{const:control 3}
				&\hspace{1.6cm} Z_0 G_2 = \begin{bmatrix} \mathbf{0}_{n_x \times (n_z-n_x)} \\ I_{n_z-n_x} \end{bmatrix}, \\
				\label{const:control 4}
				&\hspace{1.6cm} X_1 G_2 = \mathbf{0}, \\
				\label{const:control 5}
				&
				\setlength\arraycolsep{2pt}
				\begin{bmatrix}
					P  & \mathbf{0} & P & (X_1 Y)^\top & \mathbf{0} & Y^\top & \mathbf{0}\\
					\star & \gamma I_{n_w} & \mathbf{0} & \mathbf{0} & D^\top & \mathbf{0} & 
					\mathbf{0}   \\
					\star & \star & \gamma I_{n_x} & \mathbf{0} & \mathbf{0} & \mathbf{0} & 
					\mathbf{0} \\
					\star & \star & \star & \frac{\epsilon_1}{1 + \epsilon_1} P & \mathbf{0} & \mathbf{0} & D \Delta \\
					\star & \star & \star & \star & \frac{1}{\epsilon_1} P & \mathbf{0} & \mathbf{0}\\
					\star &  \star & \star & \star & \star & \epsilon_2 I_T & \mathbf{0} \\
					\star & \star & \star & \star & \star & \star & \frac{1}{\epsilon_2} I_{n_w}
				\end{bmatrix}
				\succ 0,
			\end{align}
		\end{subequations}	
		where $\epsilon_1, \epsilon_2 > 0$ are arbitrary scalars given by the user. 
	\end{theorem}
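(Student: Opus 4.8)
The plan is to certify robust stability through a quadratic Lyapunov function $V(x)=x^\top P^{-1}x$, after first reducing \eqref{eq:sys for design} to the data-based closed loop of Lemma~\ref{lemma:data-based closed-loop sys}. First I would check that the gain \eqref{controller:thm} is consistent with the hypotheses of that lemma. Reading the $\star$ block in \eqref{controller:thm} as $\mathbf 0$, the inverted matrix is $\mathrm{diag}(P^{-1},I_{n_z-n_x})$, so the controller is $K=U_0[\,YP^{-1}~G_2\,]$. Setting $G_1=YP^{-1}$ and $G=[G_1~G_2]$, constraints \eqref{const:control 2} and \eqref{const:control 3} give $Z_0G_1=Z_0YP^{-1}=\big[\begin{smallmatrix}P\\\mathbf 0\end{smallmatrix}\big]P^{-1}=\big[\begin{smallmatrix}I_{n_x}\\\mathbf 0\end{smallmatrix}\big]$ and $Z_0G_2=\big[\begin{smallmatrix}\mathbf 0\\I_{n_z-n_x}\end{smallmatrix}\big]$, so $\big[\begin{smallmatrix}K\\I_{n_z}\end{smallmatrix}\big]=\big[\begin{smallmatrix}U_0\\Z_0\end{smallmatrix}\big]G$ holds and Lemma~\ref{lemma:data-based closed-loop sys} applies. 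Hence the closed loop is \eqref{eq:lemma data closed 2} with $\bar A=(X_1-DW_0)YP^{-1}$ and $\bar E=(X_1-DW_0)G_2$; invoking constraint \eqref{const:control 4}, namely $X_1G_2=\mathbf 0$, collapses this to $\bar E=-DW_0G_2$, so the nonlinear channel is driven entirely by the bounded but unknown disturbance data.

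Next I would impose the Lyapunov decrease along \eqref{eq:lemma data closed 2}. Treating $w(k)$ and $Q(x(k))$ as exogenous inputs, a dissipation inequality of $H_\infty$ type is equivalent, after a Schur complement on the $P^{-1}$ term, to a matrix inequality in $\bar A$, $\bar E$, $D$ and $P$, with diagonal blocks $\gamma I_{n_w}$ and $\gamma I_{n_x}$ encoding the gain bounds on the disturbance and the lifted nonlinear input. The obstruction is that both $\bar A$ and $\bar E$ carry the unknown $W_0$. I would eliminate it with Lemma~\ref{lemma:disturbance bound}: the two bilinear terms $-DW_0YP^{-1}$ (inside $\bar A$) and $-DW_0G_2=\bar E$ are bounded over $W_0\in\mathcal W$ by the Young-type split $MW^\top N+N^\top WM\preceq\epsilon^{-1}MM^\top+\epsilon N^\top\Delta\Delta^\top N$, applied once per term. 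This is exactly where the user scalars $\epsilon_1$ and $\epsilon_2$ enter, and it generates the $\Delta$-weighted blocks $D\Delta$, $\tfrac1{\epsilon_1}P$, $\tfrac{\epsilon_1}{1+\epsilon_1}P$, $\epsilon_2 I_T$ and $\tfrac1{\epsilon_2}I_{n_w}$.

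A congruence transformation by a block-diagonal $\mathrm{diag}(P,\dots)$ then removes every occurrence of $P^{-1}$ and linearises the products using \eqref{const:control 2}--\eqref{const:control 4}, in particular $\bar AP=X_1Y$ and the vanishing of the $X_1G_2$ contribution; this converts the worst-case inequality into the single LMI \eqref{const:control 5}. This Schur-complement and congruence bookkeeping, together with tracking how $\tfrac{\epsilon_1}{1+\epsilon_1}P$ emerges from collecting the nominal and uncertain parts of $\bar A^\top P^{-1}\bar A$, is the most laborious step and the one I expect to be the main obstacle. Feasibility of \eqref{op:control design} then certifies $V(x(k{+}1))-V(x(k))<0$ for all admissible $W_0$ and $x\neq0$, with the $\gamma$-bounded gain absorbing the $w$ and $Q$ inputs. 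Finally I would discharge the nonlinearity via Assumption~\ref{assume:nonlinearity}: since $|Q(x)|/|x|\to0$ as $|x|\to0$, the $Q$-channel is dominated by the linear dynamics in a neighbourhood of the origin, so the finite-gain certificate upgrades to local robust asymptotic stability of \eqref{eq:sys for design}, i.e. the platoon converges to the desired velocity $v^*$ and spacing $h^*$.
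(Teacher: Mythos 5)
Your first half matches the paper exactly: reading the gain \eqref{controller:thm} as $K=U_0[\,YP^{-1}~G_2\,]$, using \eqref{const:control 2}--\eqref{const:control 3} to get $Z_0[G_1~G_2]=I_{n_z}$ with $G_1=YP^{-1}$, invoking Lemma~\ref{lemma:data-based closed-loop sys}, and using \eqref{const:control 4} to collapse $\bar{E}=(X_1-DW_0)G_2$ to $-DW_0G_2$ is precisely how the paper begins. The gap is in your second half: the derivation you outline would not produce the LMI \eqref{const:control 5}, so feasibility of \eqref{op:control design} would not certify the inequality you actually derive. You keep $Q(x(k))$ as an exogenous input and therefore keep $\bar{E}$ inside the matrix inequality, proposing to eliminate $W_0$ by applying Lemma~\ref{lemma:disturbance bound} twice, once (with $\epsilon_1$) to $-DW_0YP^{-1}$ and once (with $\epsilon_2$) to $-DW_0G_2$. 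But any Lemma~\ref{lemma:disturbance bound} bound on $-DW_0G_2$ factors that term as $N^\top W_0 M^\top$ with $G_2$ inside $M$, and hence injects a $G_2$-dependent block (of the form $\epsilon^{-1}G_2^\top G_2$ after the Schur step) plus an extra block-row/column for the $Q$-channel into the LMI. Constraint \eqref{const:control 5} contains no $G_2$ block and no channel of dimension $n_z-n_x$ at all: its seven block-rows have sizes $n_x,n_w,n_x,n_x,n_x,T,n_w$. So your ``bookkeeping'' step, which you yourself flag as the main obstacle, cannot close; your worst-case inequality and \eqref{const:control 5} are structurally different matrices.

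The paper avoids this by spending Assumption~\ref{assume:nonlinearity} \emph{before} the Lyapunov analysis: since the linear dynamics dominate near the origin, it suffices to certify robust stability of $x(k+1)=\bar{A}x(k)+Dw(k)$ alone, so $\bar{E}=-DW_0G_2$ never enters the LMI (its transient effect is only mitigated heuristically later, by adding $\|G_2\|$ to the objective in \eqref{op:control design2}). Your role assignment for the scalars is also off, and the block structure of \eqref{const:control 5} betrays this: $\epsilon_1$ is not a Lemma~\ref{lemma:disturbance bound} multiplier but the Young parameter splitting the cross term, $x^\top\bar{A}^\top P^{-1}Dw+w^\top D^\top P^{-1}\bar{A}x\leq\epsilon_1^{-1}x^\top\bar{A}^\top P^{-1}\bar{A}x+\epsilon_1 w^\top D^\top P^{-1}Dw$, which is exactly what generates $\frac{\epsilon_1}{1+\epsilon_1}P$ (the Schur block of $(1+\epsilon_1^{-1})\bar{A}^\top P^{-1}\bar{A}$) and $\frac{1}{\epsilon_1}P$; Lemma~\ref{lemma:disturbance bound} is applied exactly once, with $\epsilon_2$, to the single remaining $W_0$ term in $\bar{A}P=X_1Y-DW_0Y$, producing the $Y^\top$, $\epsilon_2I_T$, $D\Delta$ and $\frac{1}{\epsilon_2}I_{n_w}$ blocks. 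Similarly, $\gamma I_{n_x}$ is the Schur block of the $H_\infty$ performance term $\gamma^{-1}x^\top x$, not a gain bound on the lifted nonlinear input. To repair your argument: discard the $Q$-channel at the outset via Assumption~\ref{assume:nonlinearity}, use $\epsilon_1$ for the $x$--$w$ Young split, and reserve Lemma~\ref{lemma:disturbance bound} (with $\epsilon_2$) for the one $W_0$ term left in $\bar{A}$; then the two Schur complements reproduce \eqref{const:control 5}.
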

	\begin{proof}
		Suppose the SDP \eqref{op:control design} is feasible. Let $G_1 = Y P^{-1}$. The two constraints \eqref{const:control 2} and \eqref{const:control 3} together yield
		\begin{equation}\label{eq:thm control pf1}
			Z_0 [G_1 ~ G_2] = I_{n_z}.	
		\end{equation}
		Combining \eqref{eq:thm control pf1} with \eqref{controller:thm} gives
		\begin{equation}\label{eq:thm control pf2}
			\begin{bmatrix} K \\ I_{n_z}\end{bmatrix} = 
			\begin{bmatrix} U_0 \\ Z_0 \end{bmatrix} [G_1 ~ G_2].
		\end{equation}
		The satisfaction of \eqref{eq:thm control pf2} (\textit{i.e.}, \eqref{eq:lemma data closed 1}) allows the use of Lemma \ref{lemma:data-based closed-loop sys} and leads to the data-based closed-loop dynamics \eqref{eq:lemma data closed 2}. By further using the equality constraint
		\eqref{const:control 4}, \eqref{eq:lemma data closed 2} becomes
		\begin{equation}\label{eq:thm control pf4}
			x(k+1) = \bar{A} x(k) - D W_0 G_2 Q(x(k)) + D w(k).
		\end{equation}
		
		The next step is to prove that \eqref{const:control 5} ensures robust 
		asymptotic stability of \eqref{eq:thm control pf4} around the origin. Under 
		Assumption \ref{assume:nonlinearity}, the closed-loop dynamics are dominated by 
		the linear part. Hence, it is sufficient to analyse only the robust asymptotic stability 
		of the linear closed-loop dynamics
		\begin{equation}\label{eq:thm control pf5}
			x(k+1) = \bar{A} x(k) + D w(k).
		\end{equation}
		Consider the Lyapunov function $V(k) = x(k)^\top P^{-1} x(k)$. According to the Bounded Real Lemma \cite{SchererWeiland00}, \eqref{eq:thm control pf5} is $H_\infty$ robust asymptotically 
		stable if there exists a positive definite matrix 
		$P \in \mathbb{R}^{n_x \times n_x}$ and a scalar $\gamma > 0$ such that
		\begin{equation}\label{eq:thm control pf6}
			V(k+1) - V(k) + \gamma^{-1} x(k)^\top x(k) - \gamma w(k)^\top w(k) < 0.
		\end{equation}
		Applying \eqref{eq:thm control pf5} to \eqref{eq:thm control pf6} and rearranging the inequality gives
		\begin{align}\label{eq:thm control pf7}	
			\setlength{\parindent}{0in}	
			&x(k)^\top \left( \bar{A}^\top P^{-1}  \bar{A} - P^{-1} + \gamma^{-1} I_{n_x} \right) x(k) 
			\nonumber\\
			& + w(k)^\top (D^\top P^{-1} D - \gamma I_{n_w}) w(k) + x(k)^\top \bar{A}^\top 
			P^{-1} D w(k) \nonumber\\
			&  + w(k)^\top D^\top P^{-1} \bar{A} x(k) 
			< 0.
		\end{align}
		For any given scalar $\epsilon_1 > 0$, the following inequality holds:
		\begin{align}
			& x(k)^\top \bar{A}^\top P^{-1} D w(k) + w(k)^\top D^\top P^{-1} \bar{A} x(k) 
			\nonumber\\
			\leq{}& \epsilon_1^{-1} x(k)^\top \bar{A}^\top P^{-1}  \bar{A} x(k) + \epsilon_1 w(k)^\top D^\top P^{-1} D 
			w(k). \nonumber	
		\end{align}
		Then  
		a sufficient condition for \eqref{eq:thm control pf7} is given as
		\begin{align}\label{eq:thm control pf8}
			&x(k)^\top \left[ (1+\epsilon_1^{-1}) \bar{A}^\top P^{-1}  \bar{A} x(k) - P^{-1}  + \gamma^{-1} 
			I_{n_x} \right] x(k) \nonumber\\
			&+ w(k)^\top ( \epsilon_1 D^\top P^{-1} D - \gamma I_{n_w}) w(k) < 0.
		\end{align}
		
		Define $\xi(k) = [x(k)^\top, ~w(k)^\top]^\top$. The linear closed-loop 
		dynamics \eqref{eq:thm control pf4} are robustly stable if 
		\begin{equation}\label{eq:thm control pf9}
			\xi(k)^\top	\Pi \xi(k)
			< 0,
		\end{equation}	
		where $\Pi = \text{diag}(\Pi_{1,1},\Pi_{2,2})$, $\Pi_{1,1} = (1+\epsilon_1^{-1}) \bar{A}^\top P^{-1}  \bar{A} - P^{-1} + \gamma^{-1} I_{n_x}$ and $\Pi_{2,2} = \epsilon_1 D^\top P^{-1} D - \gamma I_{n_w}$.
		
		An equivalent condition to \eqref{eq:thm control pf9} is given as
$-\Pi \succ 0$. Applying Schur complement \cite{SchererWeiland00} to it yields
		\begin{equation}\label{eq:thm control pf11}	
			\begin{bmatrix}
				P^{-1}  & \mathbf{0} & I_{n_x} & \bar{A}^\top & 
				\mathbf{0}\\
				\star & \gamma I_{n_w} & \mathbf{0} & \mathbf{0} & D^\top  \\
				\star & \star & \gamma I_{n_x} & \mathbf{0} & \mathbf{0}\\
				\star & \star & \star & \frac{\epsilon_1}{1 + \epsilon_1} P & \mathbf{0} \\
				\star & \star & \star & \star & \frac{1}{\epsilon_1} P
			\end{bmatrix} 
			\succ 0.
		\end{equation}	
		Substituting $\bar{A} = (X_1 - D W_0) G_1$ into \eqref{eq:thm control pf11}, multiplying both its sides with $\text{diag}(P,I,I,I,I)$, using $G_1 = Y P^{-1}$, and then after some rearrangement, we can have that
%
		\begin{align}\label{eq:thm control pf13}
			& \Omega -  M W_0^\top N - N^\top W_0 M^\top \succ 0, \\
			\text{with}~\Omega &= 
			\begin{bmatrix}
				P  & \mathbf{0} & P & Y^\top X_1^\top  & \mathbf{0}\\
				\star & \gamma I_{n_w} & \mathbf{0} & \mathbf{0} & D^\top  \\
				\star & \star & \gamma I_{n_x} & \mathbf{0} & \mathbf{0}\\
				\star & \star & \star & \frac{\epsilon_1}{1 + \epsilon_1} P & \mathbf{0} \\
				\star & \star & \star & \star & \frac{1}{\epsilon_1} P
			\end{bmatrix}, \nonumber\\
			M^\top &= \left[ 
			Y, ~\mathbf{0}_{T \times n_w}, ~\mathbf{0}_{T \times n_x}, ~\mathbf{0}_{T \times n_x}, ~\mathbf{0}_{T \times n_x}
			\right], \nonumber\\
			N &= \left[ 
			\mathbf{0}_{n_w \times n_x}, ~\mathbf{0}_{n_w \times n_w}, ~\mathbf{0}_{n_w \times n_x}, ~D^\top, ~\mathbf{0}_{n_w \times n_x}
			\right]. \nonumber
		\end{align}
By using Lemma \ref{lemma:disturbance bound}, 
a sufficient condition to \eqref{eq:thm control pf13} is 
		\begin{equation}\label{eq:thm control pf14}
			\Omega - \epsilon_2^{-1} M M^\top - \epsilon_2 N^\top \Delta \Delta^\top N \succ 0,
		\end{equation}
for any given scalar $\epsilon_2 > 0$.		

Further applying Schur complement to \eqref{eq:thm control pf14} gives \eqref{const:control 5}. Therefore, the satisfaction of \eqref{const:control 5} leads to that of \eqref{eq:thm control pf6} and thus ensuring the robust asymptotic stability of \eqref{eq:thm control pf5}.
\end{proof}	
	
A condition ensuring feasibility of the SDP  
\eqref{op:control design} is that $Z_0$ has full row rank \cite{Persis+23}. 
This condition is necessary to have \eqref{const:control 2} and 
\eqref{const:control 3}, \textit{i.e.} \eqref{eq:thm control pf1}, fulfilled and it can be viewed as a condition on the richness of the data. 

The design in Theorem~\ref{thm:control design} ensures that the linear 
closed-loop dynamics \eqref{eq:thm control pf4} are robustly stable although 
the full closed-loop dynamics \eqref{eq:thm control pf4} has nonlinearity $D 
W_0 G_2 Q(x(k))$. 
The regions of attractions and robust invariant sets of the platooning error system \eqref{eq:sys for design} under the proposed controller can be characterised following the results in \cite[Secton VI]{Persis+23}.
It is also worth minimising the effect of nonlinearity 
during transients. This is achieved via modifying the objective 
function of the SDP \eqref{op:control design} to include the minimisation of both the values of $\gamma$ and $\| G_2 \|$. Hence, in practice the SDP \eqref{op:control design} is reformulated as
\begin{align}\label{op:control design2}	
	& \underset{P, Y, G_2, \gamma}{\min} ~\lambda_1 \gamma + \lambda_2 \|
	G_2 \| \nonumber\\
	\text{subject~to:} ~& 
	\eqref{const:control 1}, \eqref{const:control 2}, \eqref{const:control 3}, 
	\eqref{const:control 4}, \eqref{const:control 5},
\end{align}	
where $\lambda_1$ and $\lambda_2$ are given non-negative scalars. 

\begin{remark}\label{remark:runtime}
The SDP \eqref{op:control design2} is solved online only once for the entire platoon. To improve platooning performance, it is necessary to re-conduct the data collection and control design whenever a new platoon forms, \textit{e.g.}, due to cut-ins/outs of vehicles.	
Since the dimensions of decision variables increase with the number of vehicles, the SDP could be expensive to solve for large platoons. In such case, the onboard computational burden can be reduced by (virtually) splitting the platoon into small sub-platoons, for each an SDP problem of s smaller size can be formulated and solved. Moreover, it would be beneficial to solve the SDP using more powerful cloud computing facilities, if applicable.  

\end{remark}

	\section{CACC for Nonlinear Mixed Vehicle Platoon}\label{subsec:mixed platoon}
	This section applies the data-driven CACC to a mixed vehicle platoon with $n$ vehicles, $n_\text{av}$ AVs and $n_\text{hv}$ HVs. All vehicles are characterised by unknown nonlinear models and equipped with V2V communication.
	To ensure controllability of the mixed platoon, the HVs can be at any place in 
	the platoon except as the leader \cite{Lan+21d}. Let $\mathcal{N}_\text{av}$ and $\mathcal{N}_\text{hv}$ be the index sets of the AVs and HVs in the platoon, 
	respectively.  
	
	Each AV is modelled by the third-order nonlinear 
	system~\eqref{eq:nonlinear vehicle dynamics}. 
	The car-following behaviour of the $i$-th HV, $i \in \mathcal{N}_\text{hv}$, is 
	captured by the nonlinear system \cite{Lan+21d}:
	\begin{align} \label{HV dynamics}
		\begin{split}
			\dot{h}_i &= v_{i-1} - v_i, \\
			\dot{v}_i &= a_i, \\
			\dot{a}_i &= \left[ \alpha_i (V(h_i) - v_i) + \beta_i 
			(v_{i-1} - v_i) - a_i \right]/\tau_i,
		\end{split}
	\end{align}
	where $h_i = p_{i-1} - p_i$, 
	$\alpha_i$ is the headway gain, and $\beta_i$ is the relative velocity gain. 
	$V(h_i)$ is the spacing-dependent desired velocity given by 
	\begin{eqnarray} \label{range policy}
		V(h_i) = \left \{
		\begin{array}{cl}
			0, & h_i \leq h_\text{s} \\
			\frac{v_\text{max}}{2} [ 1 - \cos(\pi \frac{h_i - 
				h_\text{s}}{h_\text{g} - h_\text{s}}) ], & h_\text{s} < h_i < 
			h_\text{g} \\
			v_\text{max}, & h_i \geq h_\text{g} 
		\end{array}\right.,
	\end{eqnarray}
	where $h_\text{s}$ and $h_\text{g}$ are the gaps before the 
	HV intends to stop and to maintain the maximum velocity $v_\text{max}$, 
	respectively. 
	
The goal is to design $u_i$ for the $i$-th AV, $i \in \mathcal{N}_\text{av}$, ensuring the entire mixed vehicle platoon drive at a 
given desired constant velocity $v^* > 0$ whilst maintaining a desired inter-vehicular distance $h^* > 0$. The spacing errors $\tilde{h}_i$ and 
velocity errors $\tilde{v}_i$,  $i \in [1,n]$, are defined in the same way as 
in Section~\ref{sec:problem statement}. The platooning error system of the 
$i$-th AV, $i \in \mathcal{N}_\text{av}$, is described by \eqref{eq:error system 2}. 
The HV platooning error system of the $i$th HV, $i \in \mathcal{N}_\text{hv}$, 
is derived as
\begin{align}\label{mixeq:error system HV}
		\begin{bmatrix}
			\dot{\tilde{h}}_i \\ \dot{\tilde{v}}_i \\ \dot{a}_i
		\end{bmatrix}
		=&
		\underbrace{\begin{bmatrix}
				0 & -1 & 0 \\ 
				0 & 0 & 1 \\ 
				0 & -(\alpha_i + \beta_i)/\tau_i & -1/\tau_i
		\end{bmatrix}}_{A_i}	
		\begin{bmatrix}
			\tilde{h}_i \\ \tilde{v}_i 	\\ a_i
		\end{bmatrix} \nonumber\\
		& +
		\underbrace{\begin{bmatrix}
				0 & 1 & 0 \\0 & 0 & 0 \\0 & \beta_i/\tau_i & 0
		\end{bmatrix}}_{C_i}
		\begin{bmatrix}
			\tilde{h}_{i-1} \\ \tilde{v}_{i-1} \\ a_{i-1}
		\end{bmatrix} +
		\underbrace{\begin{bmatrix}
				0 \\ 0 \\ 1
		\end{bmatrix}}_{D_i} w_i,
\end{align} 
where $w_i =\alpha_i( V(h_i) - v^* )/\tau_i$. 
This HV platooning error system is derived without using the linearisation in the literature \cite{gao2016data,huang2019connected,huang2020learning,Lan+21d,Wang+22}, thus avoiding linearisation errors. 
	
Following Section~\ref{sec:problem statement}, a discrete-time platooning error system of the mixed platoon can be obtained in the form of \eqref{eq:sys for design}, but with $Q_j(x_j) = \mathbf{0}$, $u_j = 0$, $B_j = \mathbf{0}$, $E_j = \mathbf{0}$, for all $j \in \mathcal{N}_\text{hv}$.The same Assumptions \ref{assume:disturbance} and \ref{assume:nonlinearity} are made for the mixed vehicle platooning error system, but the disturbance bound $\delta$ is defined differently as follows: For $i \in \mathcal{N}_\text{av}$, $\delta_i$ are defined as in Assumption \ref{assume:disturbance}. For  $i \in \mathcal{N}_\text{hv}$, it is derived that $|w_i| \leq \alpha_i\max \{v_0, v_{\max} - v_0\}/\tau_i =: \delta_i, i \in \mathcal{N}_\text{hv}.$	Hence, choosing $\delta = \max_{i \in [1,n_w]} \delta_i$. The data-driven CACC design for the mixed vehicle platoon follows the same procedure as in Section~\ref{sec:control for AV platoon} and is thus not repeated here.
	
\section{Simulation Results} \label{sec:simulation}
This section reports comparative results of the proposed 
method and the classic ACC \cite{Shladover+12} in two simulation cases: 1) a platoon of pure 
AVs and 2) a mixed vehicle platoon. 
The classic ACC uses the control gains in the MATLAB example ``Adaptive Cruise Control with Sensor Fusion''.
Simulations are conducted in MATLAB running on Windows machine with a 12th Gen Intel(R) Core(TM) i7-1270P 2.2 GHz GPU and 16 GB RAM. The SDP problem is solved using the toolbox YALMIP 
\cite{lofberg2004yalmip} with the solver MOSEK \cite{mosek2010mosek}.
	
\textbf{Case 1: Platoon of pure AVs.}
This case studies a platoon with 4 AVs whose nominal vehicle parameters are 
\cite{Ghasemi+13}: $\tau_i = 0.2~\mathrm{s}$, 
$\sigma_i = 1~\mathrm{N/m^3}$, $M_i = 2.2~\mathrm{m^2}$, $c_i = 0.35$, $d_i = 
150~\mathrm{N}$, and $m_i = 1500~\mathrm{kg}$. To capture the vehicle 
heterogeneity and parameter uncertainties, a random deviation within $[-10\%, 10\%]$ is added to the nominal parameter values. 
The other platoon parameters are: $h^* = 20~\mathrm{m}$, 
$t_s = 0.05~\mathrm{s}$, $T = 500$.
The initial vehicle state $(p_i,v_i,a_i)$, $i \in [1,4]$, are randomly set as: 
$(65,20,0)$, $(40,15,0)$, $(25,18,0)$ and $(0,15,0)$, respectively. 
The desired velocity for the platoon to follow is shown in Fig. \ref{fig1}, which combines a 75~s constant speed driving at $v^*=20~\mathrm{m/s}$ (this period is set for forming the platoon and collecting data to design data-driven CACC) and the SFTP-US06 Drive Cycle. This velocity reference represents an aggressive, high speed and/or high acceleration driving behaviour with rapid speed fluctuations, which can validate the practical effectiveness of the proposed design.

\begin{figure}[t]
	\centering
	\includegraphics[scale=0.46]{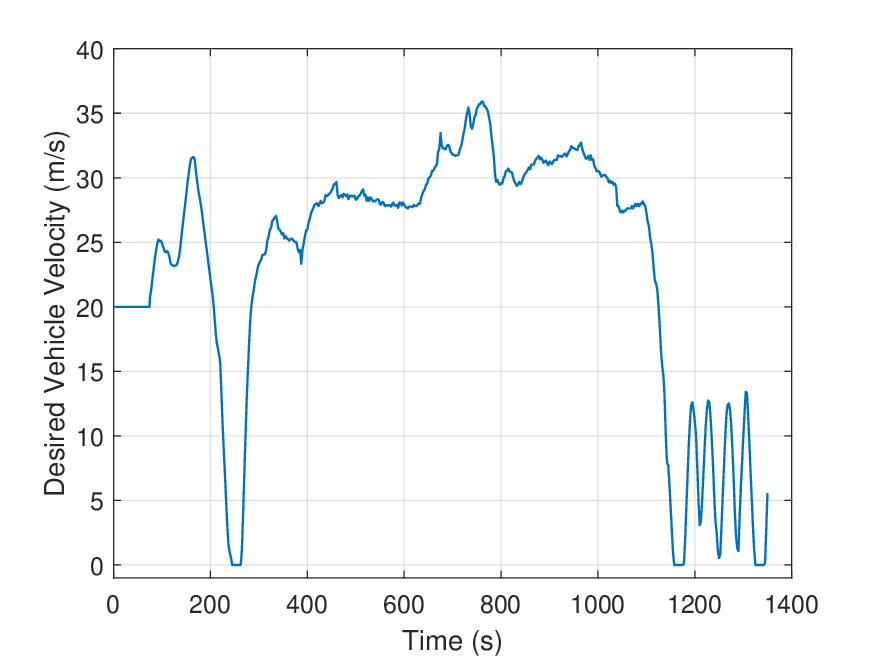} 
	\vspace{-2mm} 
	\caption{The desired velocity profile for the platoon.}
	\label{fig1}
\end{figure}

\begin{figure}[t]
	\centering
	\includegraphics[scale=0.5]{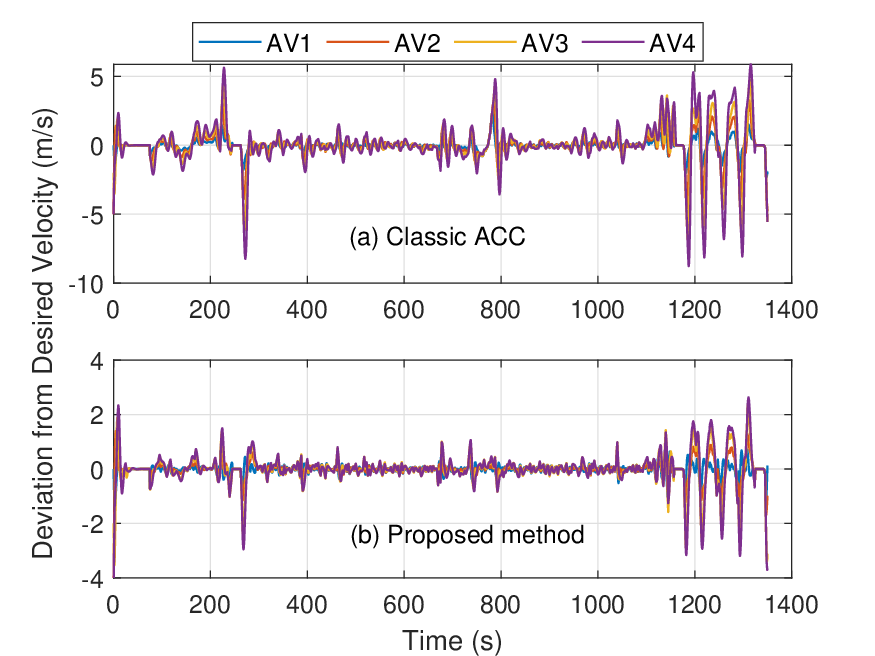}  
	\vspace{-2mm}
	\caption{Velocity deviations: Case 1.}
	\label{fig2}
\end{figure}

\begin{figure}[t]
	\centering
	\includegraphics[scale=0.5]{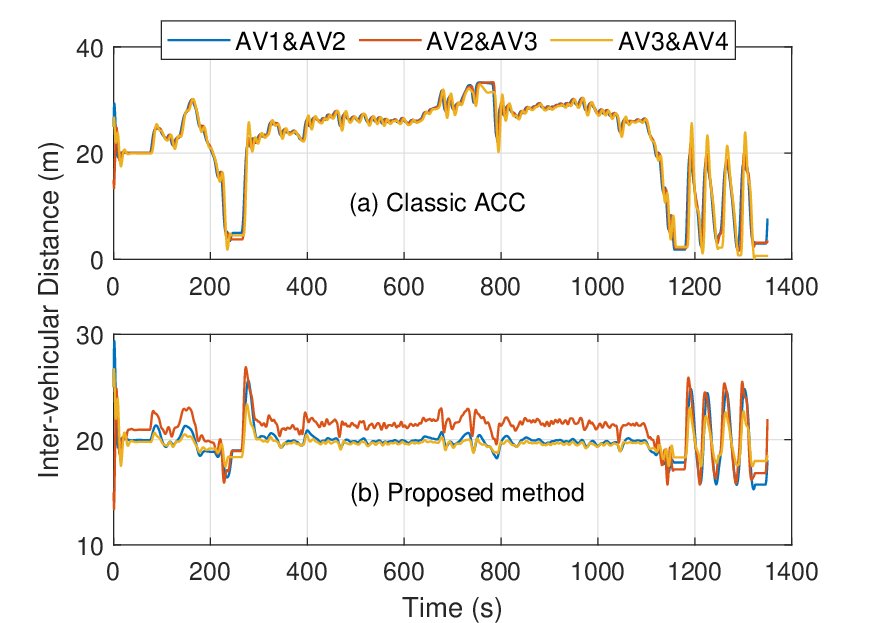} 
	\vspace{-2mm} 
	\caption{Inter-vehicular distances: Case 1.}
	\label{fig3}
\end{figure}
	
When designing controllers for all the 4 AVs using a single SDP problem \eqref{op:control design2}, solving the SDP needs 145.4 s. By splitting the platoon into two sub-platoons (one contains AV1\&AV2 and another contains AV3\&AV4), the control design is divided into two SDP problems each for a sub-platoon. Solving the two SDPs requires 17.1 s and 11.4 s, respectively. 
This confirms the discussions in Remark \ref{remark:runtime}.
	
The controllers solved from two sub-platoon SDP problems are implemented.
Fig. \ref{fig2} shows that the proposed method enables the platoon to follow 
the desired velocity profile closely and has smaller velocity deviations than 
the classic ACC. Consequently, the proposed method keeps the 
inter-vehicular distances close to the desired value $h^* = 20~\mathrm{m}$, as seen from Fig. \ref{fig3}.

\begin{figure}[t]
	\centering
	\includegraphics[scale=0.5]{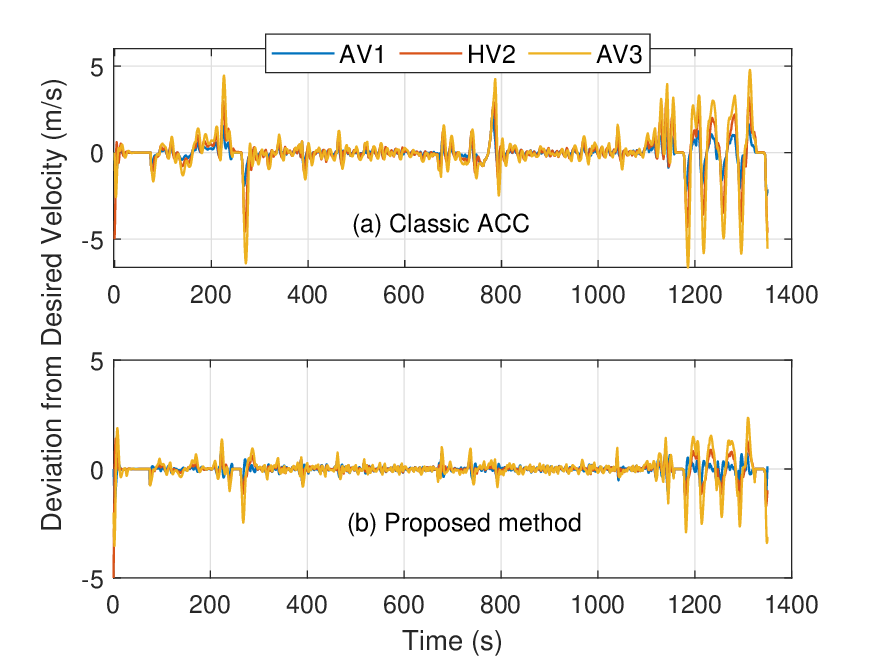}  
	\vspace{-2mm}
	\caption{Velocity deviations: Case 2.}
	\label{fig4}
\end{figure}

\begin{figure}[t]
	\centering
	\includegraphics[scale=0.5]{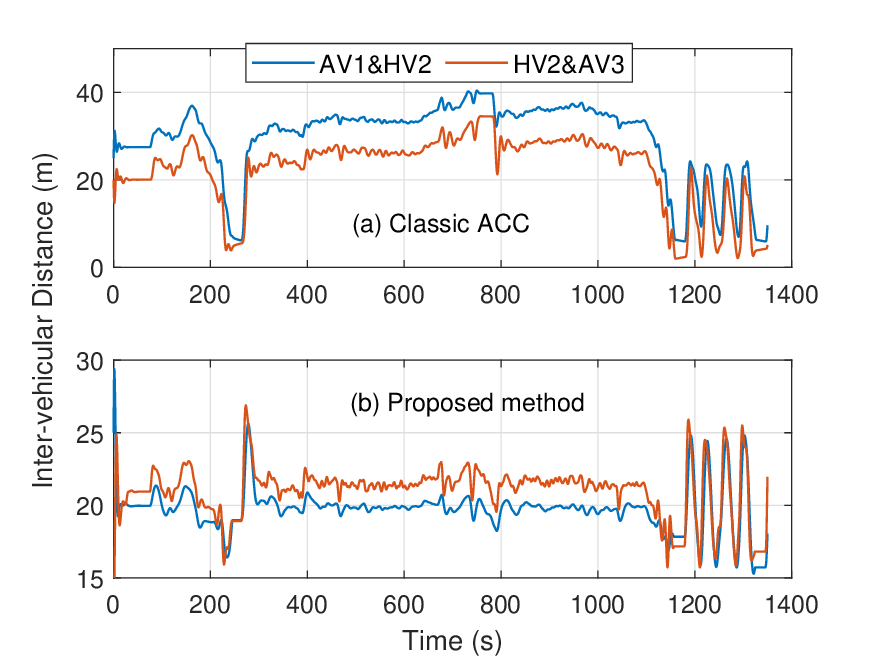} 
	\vspace{-2mm} 
	\caption{Inter-vehicular distances: Case 2.}
	\label{fig5}
\end{figure}
	
\textbf{Case 2: Mixed vehicle platoon.}
A mixed platoon with 3 vehicles, 2 AVs at the front and rear and an HV in the 
middle, is simulated. The AVs parameters are the same as Case 1. The HV parameters are $h_s = 5~\mathrm{m}$, $h_g = 50~\mathrm{m}$, $v_{\max} = 40~\mathrm{m/s}$, $\tau = 0.7~\mathrm{s}$, $\alpha = 0.2$, and $\beta = 0.4$. The initial vehicle state $(p_i,v_i,a_i)$, $i \in [1,3]$, are randomly set as: $(45,20,0)$, $(20,15,0)$ and $(0,20,0)$, respectively. The sampling time, number of data samples, desired inter-vehicular distance and velocity profile are the same as Case 1. 
	
Solving the SDP requires 18.2 s.  
As shown in Fig. \ref{fig4} and Fig. \ref{fig5}, compared to the classic ACC, the proposed method enables the platoon to follow the velocity profile more closely whilst having inter-vehicular distances that are closer to the desired value $h^* = 20~\mathrm{m}$.
	
\section{Conclusion} \label{sec:conclusion}
A data-driven CACC is proposed for vehicle platoons with consideration 
of the unknown nonlinear vehicle dynamics. The controller design is formulated 
	as an SDP problem that can be efficiently solved using 
	off-the-shelf solvers. The proposed data-driven control design is applicable to 
	platoons of pure AVs and also platoons of mixed AVs and HVs. The simulation 
	results demonstrate that the proposed method is more effective than the classic ACC in establishing a stable vehicle platoons under a representative aggressive velocity reference. Future work will focus on incorporating input and safety 
	constraints into the proposed data-driven control design to provide formal 
	guarantee on platoon safety.

	\bibliographystyle{IEEEtran} 
	\bibliography{IEEEabrv,References}

\end{document}